\begin{document}

\newtheorem{definition}{Definition}[section]
\newtheorem{thm}{Theorem}[section]
\newtheorem{proposition}[thm]{Proposition}
\newtheorem{lemma}[thm]{Lemma}
\newtheorem{corollary}[thm]{Corollary}
\newtheorem{exam}{Example}[section]

\newtheorem{remark}{Remark}[section]

\newcommand{\La}{\mathbf{L}}
\newcommand{\h}{{\mathbf h}}
\newcommand{\Z}{{\mathbb Z}}
\newcommand{\R}{{\mathbb R}}
\newcommand{\C}{{\mathbb C}}
\newcommand{\D}{{\mathcal D}}
\newcommand{\F}{{\mathbf F}}
\newcommand{\HH}{{\mathbf H}}
\newcommand{\OO}{{\mathcal O}}
\newcommand{\G}{{\mathcal G}}
\newcommand{\A}{{\mathcal A}}
\newcommand{\B}{{\mathcal B}}
\newcommand{\I}{{\mathcal I}}
\newcommand{\E}{{\mathcal E}}
\newcommand{\PP}{{\mathcal P}}
\newcommand{\Q}{{\mathbb Q}}
\newcommand{\separ}{\,\vert\,}
\newcommand{\abs}[1]{\vert #1 \vert}
\newcommand{\mindet}[1]{\hbox{\rm det}_{min}\left( #1\right)}

\title{An algebraic look into MAC-DMT of lattice space-time codes}
\author{
\authorblockN{Roope Vehkalahti, \textit{Member, IEEE} }
\authorblockA{Department of Mathematics\\
University of Turku\\
Finland\\
Email: roiive@utu.fi}
\and
\authorblockN{Hsiao-feng (Francis) Lu,  \textit{Member, IEEE}}
\authorblockA{Department of Electronical Engineering \\
National Chiao Tung University\\
Hsinchu, Taiwan\\
Email:francis@cc.nctu.edu.tw }

}
\maketitle

\title{MAC-DMT in some asymmetric situations}

\begin{abstract}
In this paper we are concentrating on the diversity-multiplexing gain trade-off (DMT) of some space-time lattice codes. First we give a DMT bound
for lattice codes having restricted dimension. We then recover the well known results of the DMT of algebraic number field codes and the Alamouti code 
by using the union bound and see that these codes do achieve the previously mentioned bound. During our analysis  interesting connections  to the \emph{Dedekind's zeta-function} and to \emph{Dirichlet's unit theorem}   are revealed. Finally we prove that both the number field codes and Alamouti code are in some sense optimal codes in the multiple access channel (MAC).
\end{abstract}

\section{Introduction}

In \cite{MACDMT} the authors gave diversity multiplexing trade-off for  MIMO (multiple-input multiple-output) MAC. 
  In their paper Tse, Viswanath and Zheng pointed out that the MAC-DMT is obviously always upper bounded by the DMT of the single-user.    In this paper we are concentrating on the scenario where the single-user codes are not DMT optimal. In such a scenario it is obvious that we cannot achieve the optimal MAC DMT given in  \cite{MACDMT}.  However we can ask another question: in which cases the single-users can maintain their single-user DMT- performance despite the interference of the other users.  
  
The importance of this problem lies in the fact that in some scenarios  codes achieving the optimal DMT can have  high decoding complexity.
As an example of a scheme, let us  consider  the situation where we have two users, both using Alamouti \cite{Alam} code, and where the receiver has two antennas.  The decoding complexity of this coding scheme  is still relatively light to decode even when for example sphere decoding is used.

In this  case the  DMT of a single-user can never be better than the performance of the Alamouti code in the
$2\times2$ MIMO channel. Therefore we are immediately bounded away from the optimal achievable MAC DMT. However, we can ask whether both transmitters can achieve their single-user performance despite the interference of the other user. 

\emph{Throughout the paper we are considering the symmetric multiplexing gain scenario}

\section{Basic definitions}
Let us now consider a slow fading channel where we have $n_t$ transmit and $n_r$  receiving antennas and  where the decoding delay is $T$ time units. 
The channel equation can be now written as
$$
Y=\sqrt{\frac{SNR}{n_t}}HX +N
$$
where $H \in M_{n_r \times n_t}(\C)$ is the channel matrix whose entries are independent identically
distributed (i.i.d.) zero-mean complex circular Gaussian
random variables with the variance 1, and $N\in M_{n_r \times T}(\C) $
is the noise matrix whose entries are i.i.d. zero-mean complex circular Gaussian random variables with the variance 1.
Here $X \in M_{n_t\times T}(\C)$ is the transmitted codeword and $SNR$ presents the signal to noise ratio.

In order to shorten the notation we denote $SNR$ with $\rho$.
Let us suppose we have coding scheme where for  each value of $\rho$ we have a code $C(\rho)$ having
$|C(\rho)|$ matrices in $M_{n \times T}(\C)$. The rate $R(\rho)$ is then $\log{(|C(\rho))|}/T$.
Let us suppose  that the scheme fulfills the power constraint  
\begin{equation}\label{powerconstraint}
\frac{1}{|C(\rho)|}\sum_{X \in C(\rho)} ||X||_F^2 \leq T n_t.
\end{equation}

We then have the following definition from \cite{ZT}.
\begin{definition}
The scheme is said to achieve \emph{spatial multiplexing gain} $r$ and \emph{diversity gain} $d$ if the data rate
$$
\lim_{\rho \to \infty} \frac{R(\rho)}{log(\rho)} = r
$$
and  the average error probability
$$
\lim_{\rho \to \infty} \frac{log(P_e(\rho))}{log(\rho)}=-d.
$$

\end{definition}

Let us now consider a coding scheme based on a $k$-dimensional lattice $L$ inside $M_{n\times T}(\C)$ where for a given positive real number $R$ the finite code is
$$
L(R)=\{a|a \in L,||a||_F\leq R \}.
$$
The following lemma is a  well known result from basic lattice theory. 

\begin{lemma}\label{spherical}
Let $L$ be a  $k$-dimensional lattice in  $M_{n\times T}(\C)$ and
$$L(R)=\{a \,|\, a\in L, \, ||a||_F\leq R \,\},$$
then
$$
|L(R)|= cR^{k}+ f(R),
$$
where $c$ is some real constant and $|f(R)| \in o(R^{(k-1/2)})$. 
\end{lemma}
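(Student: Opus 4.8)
The plan is to reduce the statement to the classical problem of counting the points of a full-rank Euclidean lattice inside a ball, and then to control the error by the volume of a thin annulus around the boundary sphere.

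First I would pass from $M_{n\times T}(\C)$, viewed as the real Euclidean space $\R^{2nT}$ with the inner product inducing the Frobenius norm, to the real span $V$ of $L$. Since $L$ is $k$-dimensional, $V$ is a $k$-dimensional subspace and $||\cdot||_F$ restricts to a genuine Euclidean norm on $V$. Fixing a linear isometry $V \cong \R^k$ carries $L$ to a full-rank lattice $\Lambda \subset \R^k$ and $L(R)$ to $\Lambda \cap B_R$, where $B_R$ denotes the closed Euclidean ball of radius $R$ centred at the origin. Thus $|L(R)|$ equals the number of points of $\Lambda$ inside $B_R$, and everything reduces to the ordinary lattice-point counting problem in $\R^k$.

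Next I would invoke the standard paving argument. Choose a half-open fundamental parallelepiped $P$ for $\Lambda$ containing the origin, so that the translates $\lambda + P$, $\lambda \in \Lambda$, tile $\R^k$, each of volume equal to the covolume $\det(\Lambda)$. Counting cells contained in $B_R$ gives a lower bound and counting cells meeting $B_R$ an upper bound for $|\Lambda \cap B_R|\cdot\det(\Lambda)$ against $\mathrm{vol}(B_R)=\omega_k R^k$, where $\omega_k$ is the volume of the unit ball. The discrepancy between these two counts consists of exactly those cells that meet the boundary sphere. Writing $\delta$ for the diameter of $P$, any such cell lies in the annulus $B_{R+\delta}\setminus B_{R-\delta}$, so the number of offending cells is at most $\mathrm{vol}(B_{R+\delta}\setminus B_{R-\delta})/\det(\Lambda)$.

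Finally I would expand this annulus volume: $\omega_k\big((R+\delta)^k-(R-\delta)^k\big)=O(R^{k-1})$ as $R\to\infty$, the leading term being $2k\delta\,\omega_k R^{k-1}$. Setting $c=\omega_k/\det(\Lambda)$ and letting $f(R)=|L(R)|-cR^k$, the above bounds give $f(R)=O(R^{k-1})$. Since $R^{k-1}/R^{k-1/2}=R^{-1/2}\to 0$, this yields the stronger conclusion $f(R)=O(R^{k-1})\subseteq o(R^{k-1/2})$, which establishes the claim with room to spare. I do not expect any serious obstacle here; the only point requiring care is the bookkeeping that identifies the boundary cells with the annulus and the ensuing binomial estimate, and the generous gap between $R^{k-1}$ and $R^{k-1/2}$ shows that the stated $o$-bound is far from tight.
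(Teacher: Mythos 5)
Your argument is correct and complete: the reduction to a full-rank lattice in $\R^k$ via an isometry of the real span, followed by the fundamental-parallelepiped paving and the annulus estimate, is the standard proof of this counting result, and it in fact delivers the stronger error term $f(R)=O(R^{k-1})\subseteq o(R^{k-1/2})$. The paper itself offers no proof, stating the lemma as ``well known from basic lattice theory,'' so your write-up simply supplies the classical argument the authors leave implicit.
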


In particular it follows that we can choose real numbers $K_1$ and $K_2$ so that
$$
K_1R^k\geq |L(R)|\geq K_2 R^k.
$$

If we then consider a coding scheme where the finite codes are sets
\begin{equation}\label{codingscheme}
C_L( \rho^{rT/k})=\rho^{-rT/k}L(\rho^{rT/k}),
\end{equation}
we will get a  correct number of codewords for each $\rho$ level and the sets $C_L (\rho^{rT/k})$ clearly do fulfill the average energy constraints  \eqref{powerconstraint}. Here and in the following we simply forget the term $\frac{1}{n_t}$ in the channel equation as it is irrelevant in DMT calculations. 

\section{An upperbound for the DMT of a $2n$-dimensional lattice code in $M_n(\C)$}
In this section we are going to give a simple analysis of achievable DMT of a $2n$-dimensional lattice code in $M_n(\C)$. 

The following lemma is a simple corollary to the Lemma \ref{spherical}.
\begin{lemma}\label{smallradius}
Let us suppose  that we have a $2n$-dimensional lattice $L$ in $M_n(\C)$ and a positive constant $k$. We then have   constants $K$ and $M$  independent of $R$ such that
$$
KR^{2n}\geq|\{X \,|\, X \in L, \, ||X||_F \leq R-k   \}|\geq MR^{2n}.
$$

\end{lemma}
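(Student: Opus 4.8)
The plan is to read the set in the statement as an instance of the counting function from Lemma \ref{spherical}. Writing $L(R-k) = \{X \mid X \in L, \, ||X||_F \le R-k\}$ in the notation introduced before that lemma, the quantity to be estimated is exactly $|L(R-k)|$. Since $L$ is a $2n$-dimensional lattice sitting inside $M_n(\C) = M_{n\times n}(\C)$, Lemma \ref{spherical} applies with its lattice dimension equal to $2n$ and $T=n$; in particular the two-sided estimate displayed immediately after it provides positive constants $K_1 \ge K_2$ with
$$
K_1 \rho^{2n} \ge |L(\rho)| \ge K_2 \rho^{2n}
$$
for all sufficiently large radii $\rho$. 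The strategy is then simply to substitute $\rho = R-k$ and to compare $(R-k)^{2n}$ with $R^{2n}$.

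For the upper bound this is immediate: once $R$ is large we have $0 \le R-k \le R$, hence $(R-k)^{2n} \le R^{2n}$, and so $|L(R-k)| \le K_1 R^{2n}$; thus $K = K_1$ serves. For the lower bound I would exploit that $(R-k)/R \to 1$: for $R \ge 2k$ one has $R-k \ge R/2$, whence $(R-k)^{2n} \ge 2^{-2n} R^{2n}$, and combining this with the lower estimate above yields $|L(R-k)| \ge K_2 2^{-2n} R^{2n}$, so that $M = 2^{-2n}K_2$ works. Equivalently one could return to the sharper form $|L(R-k)| = c(R-k)^{2n} + f(R-k)$ with $|f| \in o(R^{2n-1/2})$ and note that the $o(\cdot)$ correction is dominated by any fixed positive multiple of $R^{2n}$, but the crude squeeze above already delivers both constants.

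No step here is an essential obstacle; the entire content is the comparison $(R-k)^{2n} \sim R^{2n}$ together with the polynomial growth of $|L(\cdot)|$ already granted by Lemma \ref{spherical}. The only point requiring care, and the reason Lemma \ref{smallradius} is not a literal restatement of the earlier one, is the uniformity of the lower constant $M$ near the origin: once $R < k$ the shifted set is empty, so no strictly positive $M$ can satisfy $M R^{2n} \le |L(R-k)|$ there. The lower bound is therefore intrinsically asymptotic, valid for $R \ge 2k$. This is harmless for the intended use, where the radii grow like $\rho^{rT/k}$ and one only ever sends $R \to \infty$ in the DMT computations, so I would simply record the constants as holding for all sufficiently large $R$.
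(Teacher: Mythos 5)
Your argument is correct and is exactly the derivation the paper intends: the paper gives no proof, stating only that the lemma is ``a simple corollary to Lemma \ref{spherical}'', and substituting $R-k$ into that lemma's two-sided estimate together with the comparison $(R-k)^{2n}\ge 2^{-2n}R^{2n}$ for $R\ge 2k$ is precisely that corollary. Your observation that the lower bound can only hold for sufficiently large $R$ (the set is empty for $R<k$) is a fair and harmless caveat that the paper leaves implicit.
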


Let  $P_e(\rho, X\rightarrow X')$ denote the  error probability of decoding $X'$ when $X$, was transmitted at SNR $\rho$.
\begin{proposition}\cite{LuKumar}\label{PEPlower}
Let us suppose that we have two codewords $X, X' \in M_n(\C)$,  and that $det(X-X')\neq 0$.
We  then have that
$$
P_e (\rho, X\rightarrow X')\geq \rho ^{-nn_r}K|det(X-X')|^{-2n_r},
$$
for some constant $K$ independent of $\rho$ (but not independent of $X$ and $X'$).
\end{proposition}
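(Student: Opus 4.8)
The plan is to start from the exact pairwise error probability conditioned on the channel realization $H$ and then extract the lower bound by restricting the average over $H$ to a small set of channels that nearly annihilate the codeword difference. Write $\Delta = X - X'$, which is invertible by hypothesis. For maximum-likelihood decoding on the channel $Y = \sqrt{\rho}HX + N$ with unit-variance i.i.d.\ complex Gaussian noise, the two hypotheses $\sqrt{\rho}HX$ and $\sqrt{\rho}HX'$ are separated by Euclidean distance $\sqrt{\rho}\,\|H\Delta\|_F$, so the conditional pairwise error probability is
$$
P_e(\rho, X \to X' \mid H) = Q\!\left( \sqrt{\tfrac{\rho}{2}\,\|H\Delta\|_F^2}\, \right),
$$
where $Q$ is the standard Gaussian tail function. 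Averaging over $H$ recovers $P_e(\rho, X\to X')$.

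Next I would lower-bound $P_e(\rho,X\to X') = \mathbb{E}_H[\,P_e(\rho,X\to X'\mid H)\,]$ by discarding all probability mass except that contributed by channels with $\|H\Delta\|_F^2 \le c/\rho$, for a fixed constant $c>0$. On this event the argument of $Q$ is at most $\sqrt{c/2}$, hence $Q(\cdot)\ge Q(\sqrt{c/2})$, a strictly positive constant independent of $\rho$. This yields
$$
P_e(\rho, X\to X') \ge Q\!\left(\sqrt{c/2}\right)\cdot \mathbb{P}\!\left( \|H\Delta\|_F^2 \le c/\rho \right),
$$
reducing the problem to a small-ball estimate for the nondegenerate quadratic form $H \mapsto \|H\Delta\|_F^2$ in the $2 n n_r$ real Gaussian coordinates of $H$.

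The heart of the argument is to evaluate this small-ball probability and to see the factor $\abs{\det\Delta}^{-2n_r}$ emerge. I would apply the linear change of variables $G = H\Delta$ on $M_{n_r\times n}(\C)$. As a real-linear map on $\R^{2 n n_r}$ it acts on each of the $n_r$ rows of $H$ by right multiplication by $\Delta$, so its real Jacobian is $\abs{\det\Delta}^{2n_r}$; the density $\pi^{-n n_r}e^{-\|H\|_F^2}$ transforms into $\pi^{-nn_r}\abs{\det\Delta}^{-2n_r} e^{-\|G\Delta^{-1}\|_F^2}$. Because $\|G\Delta^{-1}\|_F \le \|\Delta^{-1}\|\,\|G\|_F$, the exponential is bounded below by a positive constant on the ball $\{\|G\|_F^2 \le c/\rho\}$ once $\rho$ is large, so the integral is at least a constant times $\abs{\det\Delta}^{-2n_r}$ times the Euclidean volume of that ball. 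Since the ball lives in dimension $2 n n_r$, its volume is proportional to $(c/\rho)^{n n_r}$, giving
$$
\mathbb{P}\!\left( \|H\Delta\|_F^2 \le c/\rho \right) \ge K' \,\rho^{-n n_r}\,\abs{\det\Delta}^{-2n_r},
$$
which combined with the previous display establishes the claim with $K = Q(\sqrt{c/2})\,K'$.

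I expect the main obstacle to be the rigorous control of the small-ball integral: one must check that in the relevant regime $c/\rho \to 0$ the Gaussian weight $e^{-\|G\Delta^{-1}\|_F^2}$ stays bounded away from zero uniformly over the shrinking ball, and that the volume scaling $(c/\rho)^{n n_r}$ is a genuine lower bound rather than merely an asymptotic equivalence. Both are handled by elementary operator-norm estimates on $\Delta^{-1}$ and the exact volume formula for a Euclidean ball, and the constants that are absorbed depend on $X$ and $X'$ only through $\det\Delta$ and $\|\Delta^{-1}\|$, and not on $\rho$, exactly as the statement requires.
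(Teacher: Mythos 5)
Your argument is correct and establishes the proposition as stated; note that the paper itself offers no proof at all, only the citation to the reference, so there is nothing internal to compare against line by line. The standard derivation behind that citation runs differently from yours: writing $\Delta=X-X'$, one averages the conditional pairwise error probability $Q\bigl(\sqrt{\rho\Vert H\Delta\Vert_F^2/2}\bigr)$ over the Rayleigh statistics of $H$ in closed form (Craig's integral representation of $Q$ plus the moment generating function of $\Vert H\Delta\Vert_F^2$), giving $P_e=\tfrac{1}{\pi}\int_0^{\pi/2}\prod_{i=1}^{n}\bigl(1+\tfrac{\rho\lambda_i}{4\sin^2\theta}\bigr)^{-n_r}\,d\theta$ with $\lambda_i$ the eigenvalues of $\Delta\Delta^{\dagger}$, and then lower-bounds the integrand on $\theta\in[\pi/4,\pi/2]$ to obtain $P_e\geq\tfrac14\prod_i(1+\rho\lambda_i/2)^{-n_r}\geq K\rho^{-nn_r}\abs{\det\Delta}^{-2n_r}$. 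Your route --- restrict to the bad-channel event $\{\Vert H\Delta\Vert_F^2\leq c/\rho\}$ and estimate its probability via the change of variables $G=H\Delta$ --- is more elementary and more robust, since it only uses that the density of $H$ is bounded below near the origin; your real Jacobian $\abs{\det\Delta}^{2n_r}$ and the ball-volume scaling $\rho^{-nn_r}$ are both right. The price is the quality of the constant: your $K$ carries a factor like $e^{-c\Vert\Delta^{-1}\Vert^2}$, so it depends on $\Delta$ through $\Vert\Delta^{-1}\Vert$ and not only through $\det\Delta$. The statement permits this, but be aware that when the paper invokes the proposition in Proposition 3.3 it applies it to $\Delta=\rho^{-r/2}X_{\min}$, a difference matrix shrinking with $\rho$, and there one needs the entire $\Delta$-dependence to sit in the displayed factor $\abs{\det\Delta}^{-2n_r}$; the eigenvalue computation delivers that uniformity, while your constant as written would degrade. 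If you want your proof to support that later use, track the constant explicitly in terms of the singular values of $\Delta$ (for instance by letting $c$ scale with $\Vert\Delta^{-1}\Vert^{-2}$ and keeping the resulting powers), or state the bound for $\Delta$ ranging over a fixed set rescaled by $\rho^{-r/2}$.
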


Let us now consider the previously defined spherical coding scheme and a $2n$-dimensional lattice code $L\subseteq M_n(\C)$.  $C_L(\rho^{r/2})=\rho^{r/2}L(\rho^{r/2})$. 

\begin{proposition}\label{simplelower}
Let us suppose that we have a $2n$-dimensional lattice  code $L$ in $M_{n\times n}(\C)$. If the transmitter has $n$ antennas and the receiver has $n_r$ antennas the DMT of this code is then upper bounded by the curve
$$
(r, n_rn(1-r)).
$$
\end{proposition}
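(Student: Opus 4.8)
The plan is to show that the average error probability of the scheme cannot decay faster than $\rho^{-n_r n(1-r)}$, which immediately forces $d(r)\le n_r n(1-r)$. The mechanism is to exhibit a positive fraction of the transmitted codewords, each of which has an extremely close neighbour that is also in the code. For such a pair the pairwise error probability of Proposition \ref{PEPlower} will already be of the order $\rho^{-n_r n(1-r)}$, and since the average error probability dominates the conditional error probability of any single codeword, which in turn dominates its pairwise error probability towards any one fixed competitor, combining these with a count of the good pairs yields the claim.

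First I would fix once and for all a single nonzero lattice vector $v\in L$ with $\det(v)\neq 0$, and for each codeword $X=\rho^{-r/2}a$ of the scheme \eqref{codingscheme} (here $T=n$ and $k=2n$, so the normalising factor is indeed $\rho^{-r/2}$) consider the competitor $X'=\rho^{-r/2}(a+v)$. The crucial observation is that all of these pairs share one and the same difference matrix $X-X'=-\rho^{-r/2}v$, so that
$$
P_e(\rho,X\to X')\;\ge\; K\,\rho^{-n n_r}\,\bigl|\rho^{-rn/2}\det(v)\bigr|^{-2n_r}\;=\;K\,|\det(v)|^{-2n_r}\,\rho^{-n_r n(1-r)},
$$
where the last equality uses $\det(\rho^{-r/2}v)=\rho^{-rn/2}\det(v)$ in $M_n(\C)$. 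Because every good pair has an identical difference matrix, the constant $K$ of Proposition \ref{PEPlower} and the determinant factor are literally the same for all of them, which is exactly why I fix one vector $v$ rather than letting the competitor vary with $X$.

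Next I would count how many codewords admit this competitor inside the finite code. The point $X'$ belongs to $C_L(\rho^{r/2})$ as soon as $\|a+v\|_F\le \rho^{r/2}$, which by the triangle inequality holds whenever $\|a\|_F\le \rho^{r/2}-\|v\|_F$. Applying Lemma \ref{smallradius} with $R=\rho^{r/2}$ and constant $\|v\|_F$ shows that the number of admissible $a$ is at least $M\rho^{rn}$, while Lemma \ref{spherical} bounds the total number of codewords by $K_1\rho^{rn}$; hence a fixed positive fraction $M/K_1$ of all codewords is good. Writing $P_e(\rho)=|C_L|^{-1}\sum_X P_e(\rho,X)$, discarding every term but the good codewords, and using $P_e(\rho,X)\ge P_e(\rho,X\to X')$ together with the displayed bound, I obtain $P_e(\rho)\ge c\,\rho^{-n_r n(1-r)}$ for some $c>0$. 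Passing to SNR exponents then gives $d(r)\le n_r n(1-r)$.

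The main point to secure is the existence and uniform treatment of the close competitor: one must first guarantee a single nonzero $v\in L$ with $\det(v)\neq 0$ so that Proposition \ref{PEPlower} applies at all (this is automatic for the full-diversity codes under consideration, but must be stated), and then exploit that translating every codeword by the \emph{fixed} $v$ keeps the difference matrix, and hence both $K$ and $|\det(v)|$, independent of the base codeword. The remaining ingredient, that a positive proportion of codewords survives shrinking the radius by the constant $\|v\|_F$, is precisely what Lemma \ref{smallradius} provides; without the matching lower and upper counts of Lemmas \ref{smallradius} and \ref{spherical} the fraction of good codewords could not be kept bounded away from zero, and the argument would collapse.
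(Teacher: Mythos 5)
Your proposal is correct and follows essentially the same route as the paper: the paper fixes the minimal-norm nonzero lattice vector $X_{min}$, shows via the triangle inequality that every codeword in the shrunken ball $L(\rho^{r/2}-\|X_{min}\|_F)$ has the translate by $X_{min}$ as an in-code competitor with a fixed difference matrix, and combines Proposition \ref{PEPlower} with the counting Lemmas \ref{spherical} and \ref{smallradius} exactly as you do. Your explicit remark that the fixed translation vector must have nonzero determinant for Proposition \ref{PEPlower} to apply is a small point of added care over the paper's wording, but the argument is the same.
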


\begin{proof}
We now have that the average error probability is
$$
P_e= \frac{1}{|L(\rho^{r/2})|}\sum_{X\in L(\rho^{r/2})} P_e(\rho,\rho^{-r/2}X),
$$
where  $P_e(\rho ,\rho^{-r/2}X)$ is the average probability for making a  mistake in receiving if  a codeword $\rho^{-r/2}X \in \rho^{-r/2}L(\rho^{r/2}) $ was transmitted with SNR $\rho$.

Let us choose such an $X_{min} \in L$  that $||X_{min}||_F=M$ is the smallest possible. We can always suppose that it is included to the set
$L(\rho^{r/2})$.

Let us now use the following notation
$$
 P_e(\rho,\rho^{-r/2}X)=Y_X.
$$
We can now divide the average error probability into two parts
$$
P_e= \frac{1}{|L(\rho^{r/2})|}\sum_{X\in L(\rho^{r/2})/L(\rho^{r/2}-M)  } Y_X +
$$
$$
\frac{1}{|L(\rho^{r/2})|} \sum_{X\in L(\rho^{r/2}-M)} Y_X,
$$
where $L(\rho^{r/2})/L(\rho^{r/2}-M)$ refers to difference of sets.  

We will now prove that for all the elements $X$ in $L(\rho^{r/2}-M)$, $Y_X$ is "large" and we prove that the number of $X$ in  $L(\rho^{r/2}-M)$ is a large enough compared to $|L(\rho^{r/2})|$. Therefore the contribution of these $Y_X$ is enough to give a lower bound  for the error probability.

Let us now consider  $Y_{X_i}$ where $X_i \in L(\rho^{r/2}-M)$. We remark that if $X_i \in L(\rho^{r/2}-M)$, then
$X_i+X_{min} \in L(\rho^{r/2})$, which follows from the triangle inequality
$$
||X_{min}+X_i||_F\leq || X_{min}||_F+||X_i||_F \leq M + \rho^{r/2}-M.
$$

The average error probability is always  bounded by  any pairwise error probability. Therefore for
$X_i \in L(\rho^{r/2}-M)$ we have
$
Y_{X_i}\geq P_e(\rho,  \rho^{-r/2}X_i  \rightarrow \rho^{-r/2}(X_i +X_{min})).
$

According to Proposition \ref{PEPlower} we get that
$$
P_e(\rho, \rho^{-r/2} X_i \rightarrow \rho^{-r/2}(X_i +X_{min})) 
$$
$$
\geq K_1 \rho^{-nn_r(1-r)}| \det(X_{min}^{\dagger} X_{min}) |^{-n_r }
$$
where $K_1$ is a constant independent of $\rho$.
It follows that
$$
\frac{1}{|(L(\rho^{r/2})|}\sum_{X \in L(\rho^{r/2}-M)} Y_X \geq |( L(\rho^{r/2}-M)| 
$$
$$
\cdot\frac{1}{|(L(\rho^{r/2})|}  K_1 \rho^{-n n_r(1-r)}| \det(X_{min}^{\dagger} X_{min})|^{-n_r }.
$$

According to Lemma \ref{spherical} there does exist such a constant $K_2$ that  $|(L(\rho^{r/2})| \leq K_2\rho^{rn}$ and from Lemma \ref{smallradius}
it follows that there is such a constant $K_3$ that $|( L(\rho^{r/2}-M)|\geq K_3\rho^{rn}. $
Combining these and the previous we have

$$
P_e\geq  \frac{1}{|L(\rho^{r/2})|} \sum_{X\in L(\rho^{r/2}-k)} Y_X
$$
$$
\geq K_2^{-1}\rho^{-rn}\cdot K_1 \rho^{-n n_r(1-r)}| \det(X_{min}^{\dagger}X_{min})|^{-n_r }K_3 \rho^{rn}
$$

$$
\geq M \rho ^{-nn_r(1-r)},
$$
where $M$ is a constant independent of $\rho$.
\end{proof}

\section{A union  bound based DMT analysis of some MISO codes}
In this section we are giving union bound based proofs for the DMT of Alamouti code and number field  based codes \cite{Belfiore}. While our approach is more laborious than the proofs usually given, it will later be proved to be helpful in MAC scenario. We point out that the achieved DMT's do achieve the bound \ref{simplelower}.
\subsection{Alamouti code}
Let us warm up by calculating the DMT of the Alamouti code in the case where  we have $n_r$ receiving antennas.
Let us use the following notation
$$
A(x_1, x_2, x_3, x_4)=
\begin{pmatrix}
x_1+x_2i& -(x_3 +x_4i)^*\\
x_3+x_4i& (x_1+x_2i)^*
\end{pmatrix}.
$$
We then have the following
$$
A(x_1,x_2, x_3,x_4)A(x_1, x_2, x_3, x_4)^{\dagger}=
$$
$$
\begin{pmatrix}
x_1^2+x_2^2 +x_3^2+x_4^2&0\\
0&x_1^2+x_2^2 +x_3^2+x_4^2
\end{pmatrix}.
$$

Here the lattice $L$ is 
$$
\Z A(1,0,0,0)+\Z A(0,1,0,0)+ \Z A(0,0,1,0)+\Z A(0,0,0,1),
$$
which  is a $4$-dimensional lattice in $M_2(\C)$. For simplicity we do not use the spherical shaping scheme, but instead  we consider
the following scheme
$$
C_1(\rho^{r/2})=\{\rho^{-r/2} A(x_1, x_2, x_3,x_4)| -\rho^{r/2}\leq x_i\leq \rho^{r/2}\},
$$
where $x_i\in\Z$.
\begin{proposition}\label{AlamDMT}
 When received with $n_r$  antennas the Alamouti code achieves the DMT curve
$$
(r,  2n_r(1-r)).
$$

\end{proposition}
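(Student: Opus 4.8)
The plan is to combine the converse bound of Proposition~\ref{simplelower} with a union-bound achievability argument. Since the Alamouti lattice sits in $M_2(\C)$ we have $n=2$, and the upper bound curve of Proposition~\ref{simplelower} reads $n_r n(1-r)=2n_r(1-r)$, which is exactly the claimed DMT. Hence it suffices to show that the scheme $C_1(\rho^{r/2})$ attains multiplexing gain $r$ and that its average error probability satisfies $P_e\leq M\rho^{-2n_r(1-r)}$ up to subpolynomial factors. The multiplexing gain is immediate: the codebook has $(2\rho^{r/2}+1)^4\doteq\rho^{2r}$ matrices and $T=2$, so $R(\rho)/\log\rho\to r$; a direct computation as in the remarks following Lemma~\ref{spherical} shows the power constraint \eqref{powerconstraint} holds. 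Here and below $\doteq$ denotes equality of the SNR exponents.

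The structural input that makes the union bound tractable is that the Alamouti matrices form a lattice. Thus for two codewords the difference is again of the same shape, $A(\mathbf{x})-A(\mathbf{x}')=A(\mathbf{m})$ with $\mathbf{m}=\mathbf{x}-\mathbf{x}'\in\Z^4\setminus\{0\}$, and the displayed identity for $A A^{\dagger}$ gives $A(\mathbf{m})A(\mathbf{m})^{\dagger}=\|\mathbf{m}\|^2 I_2$, i.e.\ both eigenvalues of the difference Gram matrix equal $\|\mathbf{m}\|^2$. After the scaling by $\rho^{-r/2}$ these eigenvalues become $\rho^{-r}\|\mathbf{m}\|^2$. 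I would then invoke the standard Chernoff pairwise-error bound $P(X\to X')\leq\det(I_2+\tfrac{\rho}{4}(X-X')(X-X')^{\dagger})^{-n_r}$, which here collapses to $(1+\tfrac14\rho^{1-r}\|\mathbf{m}\|^2)^{-2n_r}$ precisely because the two eigenvalues coincide.

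Summing this pairwise bound over all nonzero difference vectors and using the linearity of the code, the error probability is controlled by $S=\sum_{\mathbf{m}}(1+\tfrac14\rho^{1-r}\|\mathbf{m}\|^2)^{-2n_r}$, the sum running over $\mathbf{m}\in\Z^4\setminus\{0\}$ with $\|\mathbf{m}\|_\infty\leq 2\rho^{r/2}$. I would evaluate $S$ by slicing the lattice into shells on which $\|\mathbf{m}\|^2\doteq\rho^a$ for $a\in[0,r]$; by the volume estimate underlying Lemma~\ref{spherical} each such shell carries $\doteq\rho^{2a}$ points, so its contribution is $\rho^{2a}\cdot\rho^{-2n_r(1-r+a)}=\rho^{-2n_r(1-r)}\,\rho^{2a(1-n_r)}$. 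Because $n_r\geq 1$ the exponent $2a(1-n_r)$ is nonincreasing in $a$, so the maximum over $a$ is attained at $a=0$, i.e.\ at the minimal nonzero vectors, giving $S\doteq\rho^{-2n_r(1-r)}$. Matching this achievability with the converse of Proposition~\ref{simplelower} yields the exact DMT $(r,2n_r(1-r))$.

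The main obstacle is the honest control of the tail of $S$. The leading term is the eight minimal vectors $\|\mathbf{m}\|=1$, each contributing $\doteq\rho^{-2n_r(1-r)}$, but one must prove that the far more numerous large vectors do not dominate; this is exactly the Epstein/Dedekind-zeta type sum $\sum\|\mathbf{m}\|^{-4n_r}$ alluded to in the abstract. For $n_r\geq 2$ this sum converges absolutely and the shell estimate is clean, whereas for $n_r=1$ the truncated sum grows only like $\log(\rho^{r/2})$, a subpolynomial factor that is harmless at the level of SNR exponents but must be tracked carefully to keep the shell-by-shell bound valid all the way up to $a=r$.
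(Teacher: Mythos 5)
Your proposal is correct and follows the same overall route as the paper: exploit the difference-closed (lattice) structure of the Alamouti alphabet so that every pairwise difference is again of the form $A(\mathbf{m})$ with $\det A(\mathbf{m})=\|\mathbf{m}\|^2\geq 1$ for $\mathbf{m}\in\Z^4\setminus\{0\}$, apply the union bound with a determinant-based pairwise error estimate, and reduce everything to showing that the truncated Epstein-type sum $\sum_{\mathbf{m}\neq 0}\|\mathbf{m}\|^{-4n_r}$ grows at most polylogarithmically in $\rho$. The only genuine divergence is in how that sum is controlled. The paper applies the AM--GM inequality to bound $(x_1^2+\cdots+x_4^2)^{2n_r}$ from below by $|\dot x_1\dot x_2\dot x_3\dot x_4|^{n_r}$ (zeros replaced by ones), which factors the four-dimensional sum into a product of four one-dimensional harmonic-type sums, each $O(\log\rho)$; you instead slice $\Z^4$ into shells $\|\mathbf{m}\|^2\doteq\rho^a$, count $\doteq\rho^{2a}$ points per shell via Lemma~\ref{spherical}, and observe that the shell exponent $2a(1-n_r)$ is maximized at $a=0$. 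Your shell argument gives a slightly sharper logarithmic factor ($\log\rho$ versus $(\log\rho)^{4}$ when $n_r=1$) and makes the dominance of the minimal vectors more transparent, while the paper's factorization trick is the one that carries over to the number field codes of the following subsection, where the determinant is a product of conjugates rather than a power of the Euclidean norm. Your additional appeal to Proposition~\ref{simplelower} for the matching converse does not appear in the paper's proof of this proposition, but it is consistent with the remark preceding it and is harmless.
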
 

\begin{proof}
 The usual union bound argument now gives us the following bound for the error probability of making a mistake in reception when  transmitting arbitrary codeword
$$
 P_e \leq \sum_{ -2\rho^{r/2} \leq x_i \leq 2\rho^{r/2}, x_i \in \Z} \frac{\rho^{-2n_r(1-r)}}{(det(A(x_1, x_2, x_3,x_4))^{2n_r}}
 $$
$$
=\sum_{ |x_i| \leq 2\rho^{r/2}, x_i\in \Z} \frac{\rho^{-2n_r(1-r)}}{(x_1^2+x_2^2 +x_3^2+ x_4^2)^{2n_r}},
$$
where we suppose that not all $x_i$ can be $0$ at the same time.
If we then apply AM-GM inequality, we get the following
$$
P_e\leq \sum_{|x_i|\leq 2\rho^{r/2}, x_i \in \Z } \frac{\rho^{-2n_r(1-r)}}{|\stackrel{.}{x_1}\stackrel{.}{x_2}\stackrel{.}{x_3}\stackrel{.}{x_4}|^{n_r}},
$$
where the dot sign means that if $x_i=0$ we have that $\stackrel{.}{x_i}=1$.
By considering the right side of the previous equation  we  have that
$$
P_e\leq \rho^{-2n_r(1-r)}\left(\sum_{|x_1| \leq 2\rho^{r/2}, x_i  \in \Z }\frac{1}{|\stackrel{.}{x_1}|^{n_r}}\right) \cdot
$$
$$
\cdots \left(\sum_{|x_4| \leq 2\rho^{r/2}, x_i \in \Z }\frac{1}{|\stackrel{.}{x_4}|^{n_r}}\right)
$$
$$
\leq \rho^{-2n_r(1-r)}K(2log(2\rho^{r/2}))^{4n_r},
$$
where $K$ is some constant independent of $\rho$.
\end{proof}

\subsection{Diagonal Number field codes}

For simplicity let us consider a degree $n$ cyclic number field extension $K/\Q(i)$, where the Galois group is $<\sigma>$. Then we can define a
\emph{relative canonical embedding} of $K$ into $M_n(\C)$ by
$$
\psi(x)=\mathrm{diag}(\sigma_1(x),\dots, \sigma_n(x)),
$$
where $x$ is an element in $K$.
The ring of algebraic integers $\OO_K$ has a  $\Z$-basis $W=\{w_1,\dots ,w_{2n}\}$ and therefore
$$
\psi(\OO_K)=\psi(w_1)\Z+\cdots +\psi(w_{2n})\Z,
$$
is a $2n$-dimensional lattice of matrices in $M_n(\C)$. The main reason to use such a code construction is that for each  nonzero $a\in \OO_K$, we have that $|det(\psi(a))|\geq 1$. Let us now suppose that we have  a $2n$-dimensional  number field lattice code $L \subseteq M_n(\C)$ and that we are considering the coding scheme, where the finite codes are chosen by  the method of Lemma \ref{spherical}. 

We will now measure the DMT  of these type of codes. Before that we will need some concepts and lemmas.

 The unit group $U_K$ of the ring
$\OO_K$ consists of such elements $u \in \OO_K$, that $|\mathrm{det}(\psi(u))|=1$.

 We skip the proof of the following lemma, which is a corollary to \emph{Dirichlet's unit theorem}.
\begin{lemma}\label{units}
Let us suppose that we have a cyclic extension $K/\Q(i)$, where $[K:\Q(i)]=n$. 
Let us now consider the set
$$
U_K(R)=\{\psi(u)|\, u\in U_K, \, ||\psi(u)||_F\leq R \,\},
$$
we then have that 
$$
|U_K(R)|\leq M log(R)^{n-1},
$$
where $M$ is a constant independent of $R$.
\end{lemma}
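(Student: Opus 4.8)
The plan is to derive this from Dirichlet's unit theorem together with a lattice-point count in a logarithmically small region. First I would pin down the group $U_K$ itself. Since $\det(\psi(u)) = \prod_{i=1}^n \sigma_i(u) = N_{K/\Q(i)}(u) \in \Z[i]$, the condition $\abs{\det(\psi(u))} = 1$ says exactly that this relative norm is a unit of $\Z[i]$, so $U_K$ is the ordinary unit group $\OO_K^\times$. Because $\Q(i) \subseteq K$, the field $K$ is totally imaginary of degree $2n$ over $\Q$, hence has $r_1 = 0$ real places and $r_2 = n$ complex places, and Dirichlet's theorem gives $U_K \cong \mu_K \times \Z^{n-1}$ with $\mu_K$ the finite group of roots of unity. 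Moreover the $n$ complex places are represented precisely by the $\Q(i)$-embeddings $\sigma_1,\dots,\sigma_n$, one per conjugate pair $\{\sigma_i,\overline{\sigma_i}\}$, since each $\overline{\sigma_i}$ sends $i \mapsto -i$ and so differs from every $\sigma_j$.

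Next I would pass to the logarithmic embedding $\ell\colon U_K \to \R^n$, $\ell(u) = (\log\abs{\sigma_1(u)},\dots,\log\abs{\sigma_n(u)})$. Because $\prod_i \abs{\sigma_i(u)} = \abs{N_{K/\Q(i)}(u)} = 1$, every image lies in the hyperplane $H = \{x : \sum_i x_i = 0\}$, and by Dirichlet's theorem $\ell(U_K)$ is a lattice $\Lambda$ of full rank $n-1$ inside $H$, with $\ell$ being $\abs{\mu_K}$-to-one (two units share an image exactly when they differ by a root of unity, all of which satisfy $\abs{\sigma_i(\cdot)}=1$). Since $\psi$ is injective, $\abs{U_K(R)}$ counts units $u$ with $\|\psi(u)\|_F \leq R$, and this count is at most $\abs{\mu_K}$ times the number of points of $\Lambda$ coming from such units.

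The geometric core is to confine those lattice points to a region of diameter $O(\log R)$. If $\|\psi(u)\|_F \leq R$, then each $\abs{\sigma_i(u)} \leq R$, so $\ell_i(u) \leq \log R$; and as the coordinates sum to zero, $\ell_i(u) = -\sum_{j\neq i}\ell_j(u) \geq -(n-1)\log R$. Hence $\ell(u)$ lies in the box $[-(n-1)\log R,\ \log R]^n$, and therefore in the ball of radius $C\log R$ in $H$ for a constant $C = C(n)$. The final step is the standard estimate that a rank-$(n-1)$ lattice has $O(\rho^{n-1})$ points in a ball of radius $\rho$ (ball volume divided by the fixed covolume of $\Lambda$, plus a lower-order boundary term). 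Taking $\rho = C\log R$ gives $O((\log R)^{n-1})$ points, and multiplying by $\abs{\mu_K}$ yields $\abs{U_K(R)} \leq M(\log R)^{n-1}$.

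I expect the main obstacle to be the confinement step rather than the counting. The easy Frobenius bound only gives an upper bound on each coordinate $\ell_i(u)$; obtaining the matching lower bound requires the norm-one condition, and it is precisely this that collapses the a priori $n$-dimensional box into its $(n-1)$-dimensional hyperplane slice and produces the exponent $n-1$ rather than $n$. Once the region is known to have diameter $O(\log R)$ and $\Lambda$ has positive covolume (the regulator), the lattice-point count is routine.
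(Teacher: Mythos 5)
Your proof is correct and follows exactly the route the paper intends: the paper omits the proof entirely, remarking only that the lemma is ``a corollary to Dirichlet's unit theorem,'' and your argument --- identifying $U_K$ with $\OO_K^{\times}$, passing to the logarithmic embedding into the trace-zero hyperplane where the units form a rank-$(n-1)$ lattice, confining the relevant points to a ball of radius $O(\log R)$ via the norm-one condition, and counting lattice points --- is precisely the standard way to fill in that corollary. Your closing observation is also the right one: the norm-one constraint is what turns the naive exponent $n$ into $n-1$.
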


\begin{corollary}\label{Ux}
Let us suppose that we have a cyclic extension $K/\Q(i)$, where $[K:\Q(i)]=n$. Let us suppose we have  a non-zero element $x\in \OO_K$, where $||\psi(x)||_F\leq R$. We then have that
$$
|\psi(U_Kx) \cap B(R)|=|\{u \,|\, ||\psi(xu)||_F \leq R, u \in U_K \}| 
$$
$$
\leq Mlog(R)^{n-1},
$$
where $M$ is a constant independent of $R$ and of the element $x$.
\end{corollary}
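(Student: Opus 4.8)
The plan is to reduce the count to the number of lattice points of the Dirichlet unit lattice inside a box whose \emph{size} is controlled uniformly in $x$, so that Lemma~\ref{units} (equivalently, Dirichlet's unit theorem) applies verbatim up to a translation. First I would use that $\psi$ is a ring homomorphism, so $\psi(xu)=\psi(x)\psi(u)$ and, since both factors are diagonal, $\|\psi(xu)\|_F^2=\sum_{j=1}^n|\sigma_j(x)|^2|\sigma_j(u)|^2$. Passing to logarithmic coordinates, set $a_j=\log|\sigma_j(x)|$ and $v_j=\log|\sigma_j(u)|$. The defining condition $|\det(\psi(u))|=\prod_j|\sigma_j(u)|=1$ of a unit becomes $\sum_{j=1}^n v_j=0$, so $v=(v_1,\dots,v_n)$ lies in the hyperplane $H=\{v : \sum_j v_j=0\}$, and by Dirichlet's unit theorem the image $\Lambda$ of $U_K$ in $H$ is a lattice of rank $n-1$, the map $u\mapsto v$ having finite fibres of size $|\mu_K|$ (the number of roots of unity in $K$, a constant depending only on $K$).

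Next I would turn the norm bound into a box constraint. From $\|\psi(xu)\|_F\le R$ each summand satisfies $|\sigma_j(x)|^2|\sigma_j(u)|^2\le R^2$, that is $v_j\le \log R-a_j$ for every $j$. Combining these upper bounds with $\sum_j v_j=0$ gives, for each fixed $j$, the lower bound $v_j=-\sum_{i\ne j}v_i\ge -(n-1)\log R+\sum_{i\ne j}a_i$. Hence every coordinate $v_j$ is confined to an interval of length at most $n\log R-\sum_i a_i=n\log R-\log|\det(\psi(x))|$. The crucial point is that $\det(\psi(x))$ is the relative norm $N_{K/\Q(i)}(x)$, a nonzero element of $\Z[i]$, so $|\det(\psi(x))|\ge 1$ and each interval has length at most $n\log R$, a quantity independent of $x$.

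Finally I would count. The admissible units correspond, up to the finite factor $|\mu_K|$, to points of $\Lambda$ lying in a box $\prod_j I_j$ whose sides all have length at most $n\log R$; this box sits inside a ball $B(c_x,\rho)$ of radius $\rho=O(\log R)$ whose centre $c_x$ depends on $x$ but whose radius does not. Because $\Lambda$ is a group, the number of its points in any ball $B(c_x,\rho)$ is bounded by the number in $B(0,\rho+\delta)$, where $\delta$ is the diameter of a fundamental domain of $\Lambda$; this translation argument yields $|\Lambda\cap B(c_x,\rho)|\le C_K(1+\rho)^{n-1}$ with $C_K$ depending only on $K$. Taking $\rho=O(\log R)$ produces the claimed bound $M\log(R)^{n-1}$ with $M$ independent of both $R$ and $x$.

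The main obstacle is precisely the uniformity in $x$: a naive application of Lemma~\ref{units} only bounds units inside a Frobenius ball, whereas here the admissible region is skewed by $x$. The two ingredients that rescue uniformity are the arithmetic lower bound $|\det(\psi(x))|\ge 1$, which prevents the box from growing as $x$ varies, and the translation invariance of lattice-point counts coming from the group structure of $\Lambda$, which makes the count insensitive to the $x$-dependent position of the box. Everything else is the bookkeeping already implicit in the proof of Lemma~\ref{units}.
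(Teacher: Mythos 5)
Your proof is correct, and it rests on the same two pillars as the paper's: the arithmetic bound $|\det(\psi(x))|\geq 1$ to make the admissible region for $u$ uniformly bounded in $x$, and Dirichlet's unit theorem to get the $\log(R)^{n-1}$ count. Where you diverge is in the counting step. You assert that ``a naive application of Lemma~\ref{units} only bounds units inside a Frobenius ball, whereas here the admissible region is skewed by $x$,'' and you therefore redo the lattice-point count from scratch in logarithmic coordinates, handling the off-centre box by a translation argument. The paper takes exactly the route you dismissed: from $|\sigma_i(x)|\leq R$ and $\prod_i|\sigma_i(x)|\geq 1$ it deduces $|\sigma_i(x)|\geq R^{-(n-1)}$ for every $i$, hence $|\sigma_i(u)|\leq R^{n}$ for every admissible unit, hence $\|\psi(u)\|_F\leq \sqrt{n}\,R^{n}$ --- so the skewed region \emph{is} contained in a centred Frobenius ball, just one of radius $R^{n}$ rather than $R$, and Lemma~\ref{units} applied at that radius gives $M\log(\sqrt{n}R^{n})^{n-1}\leq M_1\log(R)^{n-1}$ because the exponent $n$ only rescales the logarithm. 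In log-coordinates this is the observation that your intervals $I_j$, each of length at most $n\log R$, all have right endpoint at most $\log R - a_j \le n\log R$ and hence sit inside the centred box of side $O(\log R)$, so no translation is needed. What your version buys is self-containedness (you effectively reprove Lemma~\ref{units}, whose proof the paper skips, and you make the role of the roots of unity and of the rank $n-1$ explicit); what the paper's version buys is brevity, at the cost of leaning on an unproved lemma. Both are sound.
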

\begin{proof}
We can write $\psi(x)=diag(x_1,\dots, x_n)$.
The condition $||\psi(x)||_F\leq R$ then gives us that  $|x_i|\leq R \, \, \forall i$. We also have that 
$ |x_1|\cdots |x_n|\geq 1$. It now follows that
\begin{equation}\label{coordinatesize}
 |x_i|\geq \frac{1}{R^{n-1}}\, \forall i.
 \end{equation}
 Let us now suppose that $u$ is such  a unit  that
$||\psi(ux)||_F=||\psi(u)\psi(x)||_F=||diag(x_1 u_1,\dots, x_nu_n)||_F\leq R$. Equation \eqref{coordinatesize} now gives us that
 that $ |u_i|  \leq R^n \, \forall i$. Therefore we have  that $||\psi(u)||_F \leq \sqrt{n}R^n$. Lemma \ref{units} now gives us that  $|U_K x(R)\cap B(R)|\leq  M log(\sqrt{n}R^n)^{n -1}\leq M_1 log(R)^{n -1}$, where $M_1$ is a constant independent of $R$.

\end{proof}

In the following we will use the term $\mathbf{I}_K$ for the set of integral ideals of the ring $\OO_K$.

\begin{proposition}\label{zeta}
Let us suppose that we have a cyclic  extension $K/\Q(i)$, where $[K:\Q(i)]=n$. If $\OO_K$ is principal ideal domain (PID) we have the following
$$
\sum_{||\psi(x)||_F\leq R, x\in X} \frac{1}{|det(\psi(x))|^{2n_r}} \leq M log(R)^{2n},
$$
where $X$ is such a set of elements $x$, $||\psi(x)||_F \leq R$, of $\OO_K$ that each generate a separate integral ideal.
\end{proposition}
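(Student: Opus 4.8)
The plan is to recognize the sum as a truncation of the Dedekind zeta function of $K$ and then to bound that truncation. The first and conceptually central step is the identity
$$
|\det(\psi(x))|^{2} = N_{K/\Q}((x)),
$$
the absolute norm of the principal ideal generated by $x$. Indeed $\det(\psi(x)) = \prod_{j=1}^{n}\sigma_j(x) = N_{K/\Q(i)}(x) \in \Z[i]$, and passing down the tower $K/\Q(i)/\Q$ together with $N_{\Q(i)/\Q}(z) = |z|^2$ for $z \in \Z[i]$ turns $|\det(\psi(x))|^{2}$ into $|N_{K/\Q}(x)|$, which is exactly the index $|\OO_K : (x)| = N((x))$. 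Since the elements of $X$ generate pairwise distinct ideals and $\OO_K$ is a PID, I would reindex the sum by the corresponding integral ideals $\mathfrak{a} = (x) \in \mathbf{I}_K$, obtaining $\sum_{\mathfrak{a}} N(\mathfrak{a})^{-n_r}$ over a set of distinct ideals.

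Next I would control the range of summation. From $||\psi(x)||_F^{2} = \sum_{j}|\sigma_j(x)|^{2} \leq R^{2}$ the AM--GM inequality gives $|\det(\psi(x))|^{2} = \prod_j|\sigma_j(x)|^{2} \leq (R^{2}/n)^{n} \leq R^{2n}$, so every ideal occurring satisfies $N(\mathfrak{a}) \leq R^{2n}$. Enlarging the sum to run over \emph{all} integral ideals of norm at most $R^{2n}$ can only increase it, so it suffices to bound $\sum_{N(\mathfrak{a}) \leq R^{2n}} N(\mathfrak{a})^{-n_r}$. Writing $a_m = \#\{\mathfrak{a} : N(\mathfrak{a}) = m\}$ for the Dirichlet coefficients of $\zeta_K$, the sum equals $\sum_{m \leq R^{2n}} a_m m^{-n_r}$. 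When $n_r \geq 2$ this is dominated by $\zeta_K(n_r) < \infty$, a constant, which is trivially below $M\log(R)^{2n}$; so the only real work is the case $n_r = 1$.

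For $n_r = 1$ I would use the standard coefficientwise domination $\zeta_K(s) \preceq \zeta(s)^{n}$, which holds because at most $n$ primes of $\OO_K$ lie above each rational prime and each has norm at least $p$; consequently $a_m \leq d_n(m)$, the $n$-fold divisor function. Then
$$
\sum_{m \leq X} \frac{a_m}{m} \leq \sum_{m \leq X}\frac{d_n(m)}{m} = \sum_{d_1\cdots d_n \leq X}\frac{1}{d_1\cdots d_n} \leq \Bigl(\sum_{d \leq X}\frac{1}{d}\Bigr)^{n} \leq (1+\log X)^{n},
$$
and substituting $X = R^{2n}$ yields a bound of order $(\log R)^{n} \leq M\log(R)^{2n}$. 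The main obstacle is exactly this $n_r = 1$ situation, where the zeta series diverges and the estimate must come from the growth of the ideal-counting function (equivalently, from the harmonic/divisor bound above) rather than from mere convergence; once that is in hand the two cases combine into the single uniform bound claimed.
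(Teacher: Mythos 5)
Your argument is essentially the paper's own proof: AM--GM to bound $|\det(\psi(x))|^2=|N_{K/\Q}(x)|$ by a power of $R$, reindexing the sum over a set of non-associate generators by the distinct principal ideals they generate, enlarging to a truncation of the Dedekind zeta series, and dominating that coefficientwise by a power of the truncated harmonic sum (the paper cites Narkiewicz for exactly this divisor-function bound). One slip: since $[K:\Q]=2n$ (not $n$), up to $2n$ primes of $\OO_K$ can lie above a rational prime, so the correct domination is $a_m\leq d_{2n}(m)$, i.e.\ $\zeta_K(s)\preceq\zeta(s)^{2n}$; your claimed $a_m\leq d_n(m)$ fails already for a completely split prime. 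The corrected exponent gives $(1+\log X)^{2n}$ with $X=R^{2n}$, which is precisely the $M\log(R)^{2n}$ of the statement (and explains where that exponent comes from), so the proof goes through after this one-character fix.
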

\begin{proof}
Using basic properties of algebraic norm and AM-GM inequality we have the following 
$$
|det(\psi(x))|^2=|N_{K/\Q}(x)|\leq  ||\psi(x)||_F^{2n},
$$
for any element in $\OO_K$. 
This gives us that
$$
\sum_{\substack{||\psi(x)||_F\leq R\\ x \in \OO_K}} \frac{1}{|\mathrm{det} (\psi(x))|^{2n_r}}= \sum_{\substack{|N_{K/\Q}(x)|\leq R^n \\ ||\psi(x)||_F\leq R  } } \frac{1}{|N_{K/\Q}(x)|^{n_r}},
$$
where we sum only over a set of elements each generating a separate integral ideal.  Due to this limitation and relation between ideal and element norms  we have the following
$$
\sum_{\substack{|N_{K/\Q}(x)|\leq R^n \\ ||\psi(x)||_F\leq R  } } \frac{1}{|N_{K/\Q}(x)|^{n_r}}\leq \sum_{\substack{|N_{K/\Q}(I)|\leq R^n\\I\in \textbf{I}_K}} \frac{1}{|N_{K/\Q}(I)|^{n_r}},
$$
where $I$ represents and integral ideal.
But this is  the beginning of the Dedekind's zeta-function at point $n_r$! We then have the following
$$
\sum_{ |N_{K/\Q}(I)|\leq R^n, I\in \textbf{I}_K} \frac{1}{|N_{K/\Q}(I)|^{n_r}}
$$
$$
 \leq \left( \sum_{i< R^n, i\in \Z^+ } \frac{1}{i^{n_r}}\right)^{2n}\leq(log(R^n))^{2n},
$$
where the first inequality is based on similar reasoning as in \cite[Prop. 7.2, Cor. 3]{Nark} and the last one is based on elementary approximation.

\end{proof}
Note that if $n_r>1$, Proposition \ref{zeta} gives tighter bound.
\begin{proposition}\label{numberfieldsum}
Let us suppose we have cyclic degree $n$ extension $K /\Q(i)$,  and that $\OO_K$ is a principal ideal domain. We then have that
$$
\sum_{||\psi(a)||_F\leq R, a\in \OO_K }\frac{1}{|det(\psi(a))|^{2n_r}}\leq Mlog(R)^{3n-1}.
$$
\end{proposition}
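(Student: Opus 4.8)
The plan is to exploit the fact that, since $\OO_K$ is a PID, the nonzero elements of $\OO_K$ fall into orbits under multiplication by the unit group $U_K$, with exactly one orbit per nonzero integral ideal. The quantity $|\det(\psi(a))|$ is constant along each orbit, because for a unit $u$ we have $|\det(\psi(u))|=1$ and the determinant is multiplicative. This decomposition lets me split the sum over elements into an outer sum over ideals and an inner count of how many elements of each orbit land inside the ball $\{a : ||\psi(a)||_F \leq R\}$.

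More precisely, first I would choose, for each nonzero ideal that meets the ball, a generator $x$ of shortest Frobenius norm; such an $x$ automatically satisfies $||\psi(x)||_F \leq R$. Every element $a$ of $\OO_K$ in the ball with $(a)$ equal to this ideal is of the form $a=ux$ with $u\in U_K$, and $|\det(\psi(a))|=|\det(\psi(x))|$. By Corollary \ref{Ux} the number of such $u$ with $||\psi(ux)||_F\leq R$ is at most $M\log(R)^{n-1}$, with $M$ independent of $x$; this uniformity is exactly what allows the inner count to be pulled outside the sum.

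Summing the contributions orbit by orbit, I would then obtain
$$
\sum_{\substack{a\in\OO_K,\, a\neq 0 \\ ||\psi(a)||_F\leq R}} \frac{1}{|\det(\psi(a))|^{2n_r}}
\leq M\log(R)^{n-1}\sum_{x\in X}\frac{1}{|\det(\psi(x))|^{2n_r}},
$$
where $X$ is the chosen set of representatives, one per ideal, each satisfying $||\psi(x)||_F\leq R$. The inner sum is precisely the quantity estimated in Proposition \ref{zeta}, giving $\sum_{x\in X}|\det(\psi(x))|^{-2n_r}\leq M'\log(R)^{2n}$. Combining the two estimates yields the bound $M''\log(R)^{(n-1)+2n}=M''\log(R)^{3n-1}$, as claimed.

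The bookkeeping is routine once the orbit decomposition is in place; the one point that genuinely requires care is the uniformity of the constant in Corollary \ref{Ux}. If the number of units keeping $ux$ inside the ball depended on the representative $x$, the factorization above would break down, so the main conceptual content is the observation — already built into Corollary \ref{Ux} — that this count admits a bound independent of $x$. Everything else is a matter of separately invoking the unit estimate and the Dedekind zeta estimate and adding their logarithmic exponents.
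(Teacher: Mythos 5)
Your proposal is correct and follows essentially the same route as the paper's own proof: decompose the sum into ideal classes (orbits under $U_K$), bound the multiplicity of each orbit inside the ball uniformly via Corollary \ref{Ux}, and bound the sum over representatives via Proposition \ref{zeta}, multiplying the two logarithmic factors. Your explicit choice of a shortest-norm generator and your emphasis on the uniformity of the constant in Corollary \ref{Ux} match the paper's own remark that the crucial point is that a single constant bounds every multiplicity $A_i$.
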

\begin{proof}
Just as in the proof of Proposition \ref{zeta} we can write
$$ 
\sum_{||\psi(a)||_F\leq R, a\in \OO_K }\frac{1}{|det(\psi(a))|^{2n_r}}
$$
$$
=\sum_{\substack{||\psi(a)||_F\leq R \\ |N_{K/\Q}(a))|\leq R^n, a\in \OO_K }}\frac{1}{|N_{K/\Q}(a)|^{n_r}}.
$$
The right side can then been written as
$$
\sum_{\substack{|N_{K/\Q}(x_i)|\leq R^{n}, \\x_i\in X}}\frac{A_i}{|N_{K/\Q}(x_i)|^{n_r}},
$$
where $X$ is some collection of elements $x_i \in \OO_K$, $||x_i||_F\leq R$, such that each generate separate integral ideal.  The numbers  $A_i$ present the number of elements inside $B(R)$ each generating the same ideal $x_i\OO_K$. 
As we supposed that  $\OO_K$ is a PID  Lemma \ref{zeta} gives us that 
$$
\sum_{|N_{K/\Q}(x_i)|\leq R^n, x_i\in X }\frac{1}{|N_{K/\Q}(x_i)|^{n_r}}\leq M_1 log(R)^{2n}.
$$

 From the ideal theory we know that if $x_k\OO_K=x_k'\OO_K$, then $x_k$ and $x_k'$ must differ by a unit. Therefore  we can now apply Lemma \ref{Ux} that gives us that  for all $A_i$ we have $A_i \leq M_2log(R)^{n-1} $. Combining now this  and Proposition \ref{zeta} we have
$$
\sum_{\substack{|N_{K/\Q}(x_i)|\leq R^{n}, \\x_i\in X }}\frac{A_i}{|N_{K/\Q}(x_i)|^{n_r}}\leq   M_1 M_2 log(R)^{n-1} log R^{2n}
$$
$$
=Mlog(R)^{3n-1}.
$$
The crucial point here was that we could choose the constant $M_2$ so that it bounds every $A_i$.

\end{proof}

Let us now consider a number field code $L\subset M_n(\C)$ and use the spherical coding scheme  \eqref{codingscheme}.

\begin{corollary}\label{numberfieldsDMT}
Let us suppose that we have a previously described number field code $L\subset M_n(\C)$. If the receiver has $n_r$  antennas we  achieve the  DMT curve
$$
(r, n n_r(1-r)).
$$

\end{corollary}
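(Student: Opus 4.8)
The plan is to combine the upper bound already established in Proposition~\ref{simplelower} with a matching achievability estimate obtained from the union bound, so that only the lower half of the trade-off remains. Since $L=\psi(\OO_K)$ is a $2n$-dimensional lattice in $M_n(\C)$, Proposition~\ref{simplelower} immediately gives that the diversity of this scheme is at most $nn_r(1-r)$. It therefore suffices to show that the average error probability satisfies $P_e\leq M\rho^{-nn_r(1-r)}\log(\rho)^{c}$ for some constants $M$ and $c$ independent of $\rho$, since a polylogarithmic factor does not affect the DMT exponent.

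First I would pin down the scaling in the spherical scheme \eqref{codingscheme}. Here the block length is $T=n$ and the lattice dimension is $k=2n$, so the exponent $rT/k$ equals $r/2$; the transmitted codewords are $\rho^{-r/2}\psi(a)$ with $a\in\OO_K$ and $||\psi(a)||_F\leq\rho^{r/2}$. By linearity of $\psi$, the difference of two codewords is $\rho^{-r/2}\psi(b)$ with $b=a-a'\in\OO_K$ nonzero and $||\psi(b)||_F\leq 2\rho^{r/2}$ by the triangle inequality.

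Next I would bound $P_e$ by the union bound over all such nonzero differences, using the standard pairwise error probability estimate already invoked in the proof of Proposition~\ref{AlamDMT}, namely $P_e(\rho,X\to X')\leq K\rho^{-nn_r}|\det(X-X')|^{-2n_r}$. Since $\psi(b)$ is an $n\times n$ matrix, the scaling contributes $|\det(\rho^{-r/2}\psi(b))|^{2n_r}=\rho^{-rnn_r}|\det\psi(b)|^{2n_r}$, so each term carries the factor $\rho^{-nn_r(1-r)}$ and I am left with
$$
P_e\leq K\rho^{-nn_r(1-r)}\sum_{\substack{||\psi(b)||_F\leq 2\rho^{r/2}\\ 0\neq b\in\OO_K}}\frac{1}{|\det\psi(b)|^{2n_r}}.
$$
Applying Proposition~\ref{numberfieldsum} with $R=2\rho^{r/2}$ (this is where the hypothesis that $\OO_K$ is a PID enters) bounds the sum by $M\log(2\rho^{r/2})^{3n-1}$.

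The main obstacle is really hidden in that last step: the sum over lattice differences must be controlled uniformly and independently of which codeword was transmitted, and this is exactly what the zeta-function and unit-group estimates of Propositions~\ref{zeta} and~\ref{numberfieldsum}, together with Corollary~\ref{Ux}, supply. Once the sum is bounded by a power of $\log\rho$, the conclusion follows routinely: combining the above yields $P_e\leq MK\rho^{-nn_r(1-r)}\log(2\rho^{r/2})^{3n-1}$, and since $\log(\rho^{r/2})^{3n-1}$ grows subpolynomially we obtain $\lim_{\rho\to\infty}\log(P_e)/\log(\rho)\leq -nn_r(1-r)$. This shows the scheme achieves diversity at least $nn_r(1-r)$, which together with the upper bound from Proposition~\ref{simplelower} proves that the DMT curve is exactly $(r,nn_r(1-r))$.
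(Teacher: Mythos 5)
Your proposal is correct and follows essentially the same route as the paper: a union bound over codeword differences (which, by linearity of $\psi$, are again nonzero elements of $\psi(\OO_K)$ in a ball of doubled radius), the standard pairwise error probability estimate contributing the factor $\rho^{-nn_r(1-r)}$, and Proposition~\ref{numberfieldsum} to bound the resulting sum by a power of $\log\rho$. The only addition is your explicit appeal to Proposition~\ref{simplelower} for the matching converse, which the paper states separately rather than inside this proof; this is harmless and consistent with the paper's remark that these codes meet that bound.
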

\begin{proof}
The code lattice $L\subseteq M_n(\C)$ has dimension $2n$. 
The finite codes attached to the spherical coding scheme are then 
$$
C_L(\rho^{r/2})=\rho^{-r/2}L(\rho^{r/2}).
$$

By the usual union bound argument we have the following upper bound for the  average error probability
$$
P_e \leq \sum_{X\in C_L(2\rho^{r/2})} \frac{\rho^{-nn_r(1-r)}}{|det(X)|^{2n_r}},
$$ 
where we have used the knowledge of the lattice structure of the  code $L$. In order to take into account that we are considering differences between
codewords we also took the sum over a ball with double radius.

Just as previously we have
$$
\sum_{X\in L(2\rho^{r/2})} \frac{\rho^{-n_r n(1-r)}}{|det(X)|^{2n_r}} 
$$
$$
=\sum_{||\psi(a)||_F\leq 2\rho^{r/2}, a\in \OO_K }\frac{\rho^{-n_r n(1-r)}}{|det(\psi(a)|^{2n_r}}.
$$
According to Proposition \ref{numberfieldsum} we now have
$$
\sum_{X\in L(2\rho^{r/2})} \frac{\rho^{-n_r n(1-r)}}{|det(X)|^{2n_r}} 
\leq \rho^{-n_r n(1-r)}log(2\rho^{r/2})^{3n-1}.
$$

\end{proof}

\section{MISO codes in MAC scenario}
Let us now consider a scenario where we have $K$ independent users  each using  $2n$-dimensional MISO-lattice codes $L_1,\dots, L_K \subseteq M_n(\C)$ and that the receiver has $n_r\geq K$ antennas. In this section we prove that if each user uses a MISO code from the previous sections (Alamouti or number field code)  they can reach the single-user DMT despite the interference of the other users. According to Proposition \ref{simplelower} the achieved DMT:s are the best  it is possible to get when the users are applying $2n$-dimensional lattice codes.

\begin{lemma}
The product of singular values (non-zero)
of the matrix  $AA^{\dagger}$ are the same as those of
$A^{\dagger}A$.
\end{lemma}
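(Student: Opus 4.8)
The plan is to reduce the claim to the classical fact that $AA^\dagger$ and $A^\dagger A$ share the same nonzero eigenvalues. First I would observe that both $AA^\dagger$ and $A^\dagger A$ are Hermitian and positive semidefinite, so each is unitarily diagonalizable with nonnegative real eigenvalues; for such matrices the singular values coincide with the eigenvalues themselves. Hence the proposition reduces to showing that the nonzero eigenvalues of $AA^\dagger$ and of $A^\dagger A$ agree as multisets, from which equality of their products follows immediately.

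The cleanest route is through the singular value decomposition. Writing $A = U\Sigma V^\dagger$ with $U, V$ unitary and $\Sigma$ the (generally rectangular) matrix carrying the singular values $\sigma_1, \dots, \sigma_r$ on its diagonal, I would compute $AA^\dagger = U(\Sigma\Sigma^\dagger)U^\dagger$ and $A^\dagger A = V(\Sigma^\dagger\Sigma)V^\dagger$. Both $\Sigma\Sigma^\dagger$ and $\Sigma^\dagger\Sigma$ are diagonal with the squares $\sigma_i^2$ as their only nonzero entries, differing merely in how many trailing zeros they carry. Since unitary conjugation preserves the spectrum, the nonzero eigenvalues of $AA^\dagger$ and of $A^\dagger A$ are in both cases exactly $\{\sigma_i^2 \,:\, \sigma_i \neq 0\}$, so the two products coincide.

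An equivalent, more elementary argument I could give instead runs as follows: if $A^\dagger A v = \lambda v$ with $\lambda \neq 0$, then $Av \neq 0$ (otherwise $\lambda v = 0$), and applying $A$ yields $AA^\dagger(Av) = \lambda(Av)$, so $v \mapsto Av$ sends the $\lambda$-eigenspace of $A^\dagger A$ into that of $AA^\dagger$; the symmetric construction with $A^\dagger$ in place of $A$ supplies the inverse, matching the multiplicities. The only point requiring care — and the mild obstacle here — is the rectangular case: when $A$ is not square, $AA^\dagger$ and $A^\dagger A$ have different sizes and hence different numbers of zero eigenvalues, so one must be explicit that the statement concerns only the \emph{nonzero} singular values, and must check that it is their multiplicities, and not merely their values, that agree. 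The SVD formulation makes this bookkeeping transparent, and is the version I would ultimately present.
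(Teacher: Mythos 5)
Your proof is correct. The paper actually states this lemma without any proof, treating it as a standard fact from matrix theory, so there is no argument of the authors' to compare against; your SVD computation $AA^{\dagger}=U(\Sigma\Sigma^{\dagger})U^{\dagger}$, $A^{\dagger}A=V(\Sigma^{\dagger}\Sigma)V^{\dagger}$ is the standard way to fill this gap, and you rightly flag the only subtle point, namely that the two matrices differ only in the multiplicity of the eigenvalue zero, which is exactly why the statement is restricted to the nonzero singular values.
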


The following result is well known from matrix theory.
\begin{lemma}\label{sumanproduct}
Let us consider a  $Kn\times n$ matrix $X=[X_1,\dots, X_K]^T$. We then have that
$$
\mathrm{det}((X)(X)^{\dagger}) \geq \sum_{i=1}^{K} \mathrm{det}(X_iX_i^{\dagger}).
$$
\end{lemma}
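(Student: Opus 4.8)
The plan is to reduce the statement to the \emph{superadditivity} of the determinant on the cone of positive semidefinite matrices. First I would clarify the meaning of $\det(XX^{\dagger})$: since $X$ is $Kn\times n$, the $Kn\times Kn$ matrix $XX^{\dagger}$ has rank at most $n$ and its ordinary determinant vanishes as soon as $K\geq 2$, so the quantity in question must be read as the product of the nonzero singular values. By the preceding lemma these coincide with the singular values of the $n\times n$ Gram matrix $X^{\dagger}X$, which is generically of full rank; hence $\det(XX^{\dagger})$ is to be identified with the honest determinant $\det(X^{\dagger}X)$. With this reading the block structure $X=[X_1,\dots,X_K]^T$ gives immediately
\[
X^{\dagger}X=\sum_{i=1}^{K} A_i,\qquad A_i:=X_iX_i^{\dagger},
\]
where each $A_i$ is positive semidefinite and $\det(A_i)=\det(X_iX_i^{\dagger})$ (the conjugation implicit in the transpose convention being harmless for the real determinants involved). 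So the whole lemma follows from the inequality $\det\big(\sum_i A_i\big)\geq\sum_i \det(A_i)$ for positive semidefinite $A_i$.

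Second, I would prove this superadditivity, first for two summands $A,B\succeq 0$ and then by induction on $K$. The clean case is when $A$ is positive definite: writing $A+B=A^{1/2}(I+C)A^{1/2}$ with $C:=A^{-1/2}BA^{-1/2}\succeq 0$ gives $\det(A+B)=\det(A)\det(I+C)$. If $\mu_1,\dots,\mu_n\geq 0$ are the eigenvalues of $C$, then $\det(I+C)=\prod_i(1+\mu_i)\geq 1+\prod_i\mu_i=1+\det(C)$, since expanding the product over the empty and full index sets already yields $1+\prod_i\mu_i$ and the remaining terms are nonnegative. Multiplying back by $\det(A)$ and using $\det(A)\det(C)=\det(B)$ gives $\det(A+B)\geq\det(A)+\det(B)$. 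For singular $A$ I would argue by monotonicity: $A+B\succeq B\succeq 0$ forces $\det(A+B)\geq\det(B)=\det(A)+\det(B)$ because $\det(A)=0$; alternatively one replaces $A$ by $A+\varepsilon I$ and lets $\varepsilon\to 0$.

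Finally, induction on the number of users extends the two-term bound to $\det(\sum_{i=1}^K A_i)\geq\sum_{i=1}^K\det(A_i)$: apply the estimate to the positive semidefinite pair $A_1+\cdots+A_{K-1}$ and $A_K$, then invoke the inductive hypothesis on the first $K-1$ terms. I expect the only genuine point of care to be the matrix-factorization step, which presupposes $A$ invertible; the main obstacle is therefore handling the degenerate (singular) case cleanly, which the monotonicity remark (or the $\varepsilon$-perturbation) settles. I would also note that the sharper Minkowski determinant inequality $\det(A+B)^{1/n}\geq\det(A)^{1/n}+\det(B)^{1/n}$ implies the same conclusion and could be cited instead, but the elementary eigenvalue argument above keeps the proof self-contained.
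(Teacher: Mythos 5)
Your proof is correct. The paper itself gives no argument for this lemma --- it is stated as ``well known from matrix theory'' with no proof --- so there is nothing to compare against except to check your reasoning on its own terms, and it holds up. You correctly identify the one point that genuinely needs attention: for $K\geq 2$ the $Kn\times Kn$ matrix $XX^{\dagger}$ is rank-deficient, so the statement is only meaningful if $\det(XX^{\dagger})$ is read as the product of the nonzero singular values, i.e.\ as $\det(X^{\dagger}X)$; this is exactly what the paper's preceding lemma on the singular values of $AA^{\dagger}$ versus $A^{\dagger}A$ is there to license, and it is how the quantity is used in Proposition \ref{macnumberfield}. From there your reduction to superadditivity of the determinant on the positive semidefinite cone, the factorization $A+B=A^{1/2}(I+C)A^{1/2}$ with $\det(I+C)=\prod_i(1+\mu_i)\geq 1+\det(C)$, the monotonicity (or $\varepsilon$-perturbation) argument for singular $A$, and the induction on $K$ are all sound. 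One cosmetic remark: with the blocks stacked vertically the Gram matrix is $X^{\dagger}X=\sum_i X_i^{\dagger}X_i$ rather than $\sum_i X_iX_i^{\dagger}$, but since $\det(X_i^{\dagger}X_i)=\det(X_iX_i^{\dagger})=|\det(X_i)|^2$ for square blocks this does not affect the inequality, as you note in passing.
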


 Let us  suppose that the receiver uses  joint decoding.
 As noted in \cite{MACDMT} this choice of receiving strategy does not change the DMT performance of each user.
We can now consider  the whole system as a single-user code where the single-user has $Kn$ transmit antennas and the receiver has 
$n_r$ receiving antennas.   The single-user code can then be defined as
$$
L= \{[X_1, X_2,\dots, X_K]^T \,|\,  X_i \in L_i \,\} \subseteq M_{Kn\times n}(\C).
$$
As each of the lattices $L_i$ are $2n$-dimensional the lattice $L$ is $2Kn$-dimensional.

Following the previously defined coding scheme \eqref{codingscheme} we define the finite codes needed in DMT analysis by
$$
C_L(\rho^{r/2K})=\rho^{-r/2K}L(\rho^{r/2K}).
$$
Let us now suppose that each $L_i$ is either number field code as defined previously or in the case $n=2$ Alamouti code.

The crucial properties of the  codes $L_i$ are the following.
\begin{itemize}
\item We have $|det(X_i)|\geq 1$, when $X_i \neq 0$ and $X_i\in L_i$.
\item We also have the inequality 
\begin{equation}\label{generalsum}
\sum_{ X \in L_i |\,||X||_F\leq R}\frac{1}{|det(X)|^2}\leq Slog(R)^M, 
\end{equation}
where  $S$  and $M$ are some constants.
\end{itemize}
\begin{proposition}\label{macnumberfield}
Let us suppose that we have the previously described coding scheme and that the receiver has $n_r\geq K$ antennas.
Then the code $L$ achieves the DMT curve
$$
(r, n n_r(1-r/K))
$$
and  each single-user achieves DMT curve
$$
(r, n n_r(1-r)).
$$
\end{proposition}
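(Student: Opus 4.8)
The plan is to treat the combined transmission as a single virtual user sending the $2Kn$-dimensional lattice code $L\subseteq M_{Kn\times n}(\C)$ over a channel with $Kn$ transmit and $n_r$ receive antennas. By the remark following \cite{MACDMT}, joint decoding leaves the per-user DMT unchanged, so it suffices to analyse this single-user code and then read off the individual performance at the end. First I would set up the usual union bound over codeword differences. A difference of two points of $C_L(\rho^{r/2K})$ has the form $\rho^{-r/2K}W$ with $W=[W_1,\dots,W_K]^T\in L$, $W\neq0$, and $\|W\|_F\le 2\rho^{r/2K}=:R$. Every nonzero block $W_i$ is invertible because $|\det W_i|\ge1$, so $W^{\dagger}W=\sum_i W_i^{\dagger}W_i$ is positive definite and $W$ has full column rank $n$. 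Hence the full-rank pairwise error bound used in Corollary~\ref{numberfieldsDMT} applies, and since the factor $\rho^{-r/2K}$ rescales the $n$ nonzero singular values (contributing $\rho^{-rn/K}$ to the minimum determinant) it yields
$$
P(\rho^{-r/2K}W)\ \lesssim\ \rho^{-nn_r}\,\mindet{(\rho^{-r/2K}W)(\rho^{-r/2K}W)^{\dagger}}^{-n_r}
=\rho^{-nn_r(1-r/K)}\,\mindet{WW^{\dagger}}^{-n_r}.
$$

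Next I would invoke Lemma~\ref{sumanproduct} to decouple the minimum determinant, $\mindet{WW^{\dagger}}\ge\sum_{i=1}^K|\det W_i|^2$, so that the average error probability is controlled by
$$
P_e\ \le\ \rho^{-nn_r(1-r/K)}\sum_{\substack{W\in L,\ W\neq0\\ \|W\|_F\le R}}\Big(\sum_{i=1}^K|\det W_i|^2\Big)^{-n_r}.
$$
The heart of the argument, and the step I expect to be the main obstacle, is showing that this coupled sum is only polylogarithmic in $\rho$. I would organise it by the support $S=\{i:W_i\neq0\}$, writing $m=|S|\le K$ and noting $\|W_i\|_F\le\|W\|_F\le R$. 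The summand does not factor as it stands, but the AM--GM inequality gives $\sum_{i\in S}|\det W_i|^2\ge m\big(\prod_{i\in S}|\det W_i|^2\big)^{1/m}$, so it is bounded by $\prod_{i\in S}|\det W_i|^{-2n_r/m}$, which does factor. Here the hypothesis $n_r\ge K\ge m$ is exactly what is needed: it forces the exponent $2n_r/m\ge2$, and since $|\det W_i|\ge1$ we may lower it to $2$. Each block factor is then at most $\sum_{W_i\in L_i,\ \|W_i\|_F\le R}|\det W_i|^{-2}$, which by the crucial property \eqref{generalsum} is $\le S\log(R)^M$.

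Putting these together, each of the $2^K-1$ supports contributes at most $\big(S\log(R)^M\big)^{K}$, so the whole sum is $O\big((\log R)^{MK}\big)=O\big((\log\rho)^{MK}\big)$; being subpolynomial it is invisible to the DMT. Thus $P_e\le\rho^{-nn_r(1-r/K)}\,(\log\rho)^{O(1)}$, establishing the achievability of the curve $(r,nn_r(1-r/K))$ for the code $L$. Finally, since the combined multiplexing gain $r$ splits symmetrically into a per-user gain $r/K$, and each user's error event is contained in the combined error event, every user attains diversity $nn_r(1-r/K)$ at per-user multiplexing gain $r/K$; reparametrising the per-user multiplexing gain as $r$ gives the single-user curve $(r,nn_r(1-r))$, which by Proposition~\ref{simplelower} is the best any $2n$-dimensional lattice code in $M_n(\C)$ can achieve. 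Hence each user retains its optimal single-user DMT despite the interference.
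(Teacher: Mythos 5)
Your proposal is correct and follows essentially the same route as the paper: union bound, Lemma \ref{sumanproduct} to lower-bound $\det(WW^{\dagger})$ by $\sum_i|\det W_i|^2$, AM--GM to decouple the blocks, the hypothesis $n_r\geq K$ together with $|\det W_i|\geq 1$ to reduce each exponent to $2$ so that the per-user bound \eqref{generalsum} applies, and finally the rescaling $r\mapsto r/K$ to read off the single-user curve. Your bookkeeping by support sets is a slightly cleaner way of handling the zero blocks than the paper's ``dotted determinant'' convention, but it is the same argument.
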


\begin{proof}
The usual union bound argument gives us
$$
P_e\leq \rho^{-(nn_r(1-r/K))}\sum_{X\in L(2\rho^{r/2K})} \frac{1}{|det(X X^{\dagger})|^{n_r}} 
$$
$$
\leq \rho^{-(nn_r(1-r/K))} \sum_{X_i \in L(2\rho^{r/2K})} \frac{1}{(|det(X_1)|^2+ \cdots +det|(X_K)|^2)^{n_r}}
$$
$$
\leq \rho^{-nn_r(1-r/K)} \sum_{X_i \in L_i(2\rho^{r/2K})} \frac{1}{(\stackrel{.}{|det(X_1)|}\cdots\stackrel{.}{|det(X_K)|})^{2n_r/K}}
$$
 
$$
\leq \rho^{-nn_R(1-r/K)} \left(\sum_{X_1  \in L_1(2\rho^{r/2K})} \frac{1}{|det(X_1)|^2} \right)^{n_r}\cdots
$$
 $$
\cdots \left(\sum_{X_K L_K(2\rho^{r/2K})} \frac{1}{|det(X_K)|^2}\right)^{n_r}
$$
$$
\leq S\rho^{-nn_r(1-r/K)}  log(2\rho^{r/2K})^ {VK},
$$
where the last inequality comes from  condition \eqref{numberfieldsum} and where the dotted notation $\stackrel{.}{det(X_i)}$=1, if $X_i=0$. Here $S$ and $V$ are constants.
This finally gives us 
$$
P_e\leq M\rho^{-nn_r(1-r/K)}log(\rho)^T,
$$
where  $M$ and $T$ are some constants independent of $\rho$. This gives us the first claim. 

In order to get the single-user perspective we have to multiply the multiplexing gain $r$ with number of user $K$. This gives us
the second claim.
\end{proof}

\begin{remark}
We  point out that in the previous result we could combine any kind of codes as long  as they fulfilled the conditions given before Proposition \ref{macnumberfield}.
The DMT of stacked Alamouti codes was earlier proved in \cite{stackedDMT}.

\end{remark}

\section*{Acknowledgement}
 The research of  R. Vehkalahti is supported by the Emil Aaltonen Foundation and by the Academy of
Finland (grant 131745). 


\begin{thebibliography}{1}
\bibitem{MACDMT} D. Tse, P. Viswanath, and L. Zheng, ``Diversity and multiplexing
tradeoff in multiple-access channels'',  {\it IEEE Trans. Inf. Theory},
vol. 50,  pp. 1859--1874, September 2004.

\bibitem{Alam} S. M. Alamouti, ``A Simple Transmit  Diversity
Technique  for Wireless Communication'',
 {\it IEEE J. on Select. Areas in Commun.}, vol. 16, pp. 1451--1458, October 1998.





\bibitem{Belfiore} X. Giraud, E. Boutillon, and J. C. Belfiore, ``Algebraic tools to build modulation schemes for fading channels'',  {\it IEEE Trans. Inf. Theory}, vol.43, pp. 938--952, May 1997.




\bibitem{Nark}  W. Narkiewicz, \emph{Elementary and Analytic Theory of Algebraic
Numbers}, Springer, Berlin, 1980.




\bibitem{TSC} V. Tarokh, N. Seshadri, and A.R. Calderbank,
``Space-Time Codes for High Data Rate Wireless Communications:
Performance Criterion and Code Construction'', {\it IEEE Transactions on
Information Theory}, vol. 44, pp. 744--765, March 1998.







 \bibitem{stackedDMT}  Narayan Prasad, Luca Venturinot, Xiaodong Wang and Mohammad Madihian, ``Analysis of Multiuser Stacked Space-time
Orthogonal and Quasi-orthogonal Designs'', Proc. 2007 IEEE Int. Symp. Inform. Theory (ISIT),  ISIT2007, Nice, France, June 2007.

\bibitem{ZT} L. Zheng and D. Tse, ``Diversity and Multiplexing:
A Fundamental Tradeoff in Multiple-Antenna Channels'', {\it IEEE Trans. Inf. Theory}
vol. 49, pp. 1073--1096, May 2003.

\bibitem{LuKumar}Hsiao-feng Lu, Yuankai Wang, P. Vijay Kumar and Keith M. Chugg
``Remarks on Space-Time Codes Including a New Lower Bound and an Improved Code'', {\it IEEE Trans. Inf. Theory} vol. 49, pp. 2752--2757, October 2003. 



\end{thebibliography}
\end{document}